\DeclareRobustCommand{\cev}[1]{%
	{\mathpalette\do@cev{#1}}%
}
\newcommand{\do@cev}[2]{%
	\vbox{\offinterlineskip
		\sbox\z@{$\m@th#1 x$}%
		\ialign{##\cr
			\hidewidth\reflectbox{$\m@th#1\vec{}\mkern4mu$}\hidewidth\cr
			\noalign{\kern-\ht\z@}
			$\m@th#1#2$\cr
		}%
	}%
}
\DeclareMathOperator*{\argmin}{arg\,min}
\newtheorem{thm}{Theorem}
\newtheorem{lemma}[thm]{Lemma} 
\theoremstyle{definition}
\theoremstyle{remark}
\title{Optimizing Age-of-Information in Piggyback Networks with Recurrent Data Generation}
\author{
	\IEEEauthorblockN{%
		Ching-Chi Lin\IEEEauthorrefmark{1},
		Mario G\"{u}nzel\IEEEauthorrefmark{1},		
		and Jian-Jia Chen\IEEEauthorrefmark{1}\IEEEauthorrefmark{2}
	}
	\IEEEauthorblockA{%
		\IEEEauthorrefmark{1} TU Dortmund University, Dortmund, Germany\\		
		\IEEEauthorrefmark{2} Lamarr Institute for Machine Learning and Artificial Intelligence, Dortmund, Germany\\
		Email: \{chingchi.lin, mario.guenzel, jian-jia.chen\}@tu-dortmund.de
	}
}
\begin{document}

\maketitle

\begin{abstract}    		
	\emph{Age-of-information} (AoI) is a critical metric that quantifies the freshness of data in communication systems.
	In the era of the Internet of Things (IoT), data collected by resource-constrained devices often need to be transmitted to a central server to extract valuable insights in a timely manner.
	However, maintaining a stable and direct connection between a vast number of IoT devices and servers is often impractical.
	The \emph{Store-Carry-Forward} (SCF) communication paradigm, such as Piggyback networks, offers a viable solution to address the data collection and transmission challenges in distributed IoT systems by leveraging the mobility of mobile nodes.
	
	In this work, we investigate AoI within the context of \emph{patrolling data collection drones}, where data packets are generated \emph{recurrently} at devices and collected by a patrolling drone to be delivered to a server.	
	Our objective is to design a patrolling route that minimizes the \emph{Maximum Age-of-Information} (MAI) across the system.
	We demonstrate that determining whether a route with an MAI below a certain threshold can be constructed is NP-Complete.
	To address this challenge, we propose two approaches with approximation guarantees.	
	Our evaluation results show that the proposed approaches can achieve near-optimal routes in reasonable time across various scenarios.
\end{abstract}

\section{Introduction}

In the era of the Internet of Things (IoT), an increasing number of devices are being deployed for monitoring and data collection, generating vast amounts of information.
For instance, in large-scale environmental monitoring systems, sensors are deployed to collect data on parameters such as temperature, humidity, and air quality.
Given that many of these devices are resource-constrained, the collected data often need to be transmitted to a server for further processing and extraction of valuable insights. However, maintaining a stable and direct connection between these devices and the server is often impractical due to the high volume of data and the significant cost of infrastructure deployment.

Piggyback networks offer a promising solution to address data collection and transmission challenges in distributed IoT systems.
In Piggyback networks, data packets are transmitted from the source to the destination by leveraging the mobility of mobile nodes.
This approach follows the \emph{Store-Carry-Forward} (SCF) communication paradigm, where data packets are temporarily stored at intermediate nodes and then transported to their destinations by mobile nodes.
These mobile nodes can be vehicles, robots, or unmanned aerial vehicles (UAVs) equipped with high-speed storage devices and high-frequency communication modules.

Advancements in wireless communication technologies for next-generation networks have further enhanced the viability of Piggyback networks.
High-speed, short-distance wireless communication technologies, such as \textit{Millimeter waves} (mmWave)~\cite{koenig2013wireless} and \textit{Terahertz} (THz)~\cite{akyildiz2014terahertz}, enable rapid data transfer, making them well-suited for handling large volumes of data.
In particular scenarios, Piggyback networks utilizing drones equipped with mmWave communication modules have demonstrated the potential to achieve higher throughput compared to 5G and LTE networks~\cite{SoHasegawa2023}.

Previous studies~\cite{9700436, SoHasegawa2023, DaisukeYamamoto20242023XBL0180} on Piggyback networks have primarily focused on throughput optimization.
These studies address scenarios where data packets need to be transported from various locations to their designated destinations.
Given that each data packet is collected and transported only once, the system's overall throughput is significantly influenced by the routing decisions made for the drones within the network.

\begin{figure}[tb]
    \centering
	\begin{tikzpicture}
		\node[inner sep=0pt] (scenario) at (0,0){
        \includegraphics[width=.75\linewidth]{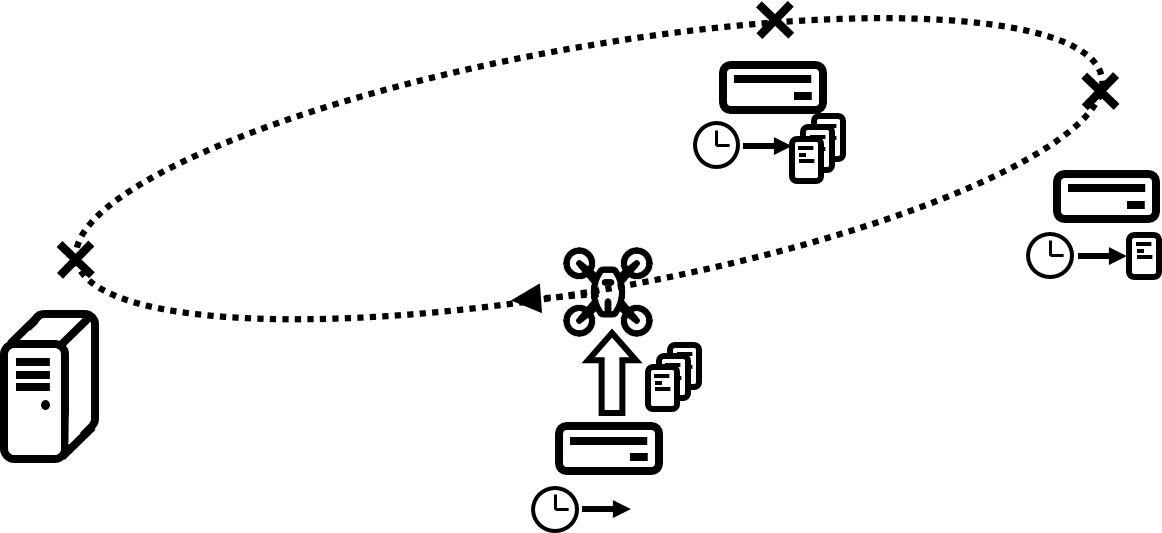}
		};
		\node[] at (-2.6,-1.1){$v_0$};
		\node[] at (1.7,0.9){$v_1$};
		\node[] at (3.6,0.2){$v_2$};
		\node[] at (0.7,-1.3){$v_3$};
	\end{tikzpicture}    
    \caption{An example of a patrolling data collection drone system with a server $v_0$ and three IoT devices $v_1$, $v_2$, and $v_3$.
	The IoT devices recurrently generate data, which the drone collects upon visiting each device.
	The drone then transports the collected data back to the server for processing.}
	\label{fig:patrolling_drone}
	\vspace{-1em}
\end{figure}

Instead of focusing on one-time pickup-and-delivery and optimizing throughput, this work explores an alternative application of Piggyback networks using a \textit{patrolling data collection drone} with an emphasis on \emph{Age-of-Information} (AoI).
\emph{Age-of-Information} (AoI)~\cite{Sun2019AgeOI} is a critical metric for assessing data freshness in communication systems, quantifying the duration between the generation of a data packet at the source and its arrival at the destination for processing.
AoI is highly relevant in applications where information is time-sensitive, such as vehicular networks~\cite{9896680} and disaster management~\cite{QADIR2021114}.

In the \textit{patrolling data collection drone} scenario, IoT devices are strategically deployed within a confined area, such as monitoring sensors in a forest or ocean region, or surveillance cameras around a hazardous zone.
These devices \emph{recurrently} generate data packets that must be transmitted to a server for aggregation and analysis.
While these devices generally remain stationary, they can be relocated if necessary.
Given the challenges of maintaining a direct or stable connection to the server, especially due to high data volumes or limited infrastructure, direct data transmission is often impractical.
Instead, drones are employed to patrol the area, collect data packets from these devices, and transport them back to the server for aggregation and analysis, as depicted in Figure~\ref{fig:patrolling_drone}.
In this context, minimizing the AoI is essential to ensure that the server continuously receives and processes up-to-date information from the deployed devices.

In this work, we investigate the \emph{AoI-aware route planning problem} for patrolling data collection drones.
Given the locations of the server and the data nodes (i.e., the IoT devices), our objective is to design a patrolling route that minimizes the \emph{maximum Age-of-Information} (MAI) across the system.
We first demonstrate that the decision version of the \emph{AoI-aware route planning problem} is NP-Complete.
To address this challenge, we propose two approaches with approximation guarantees.

Our main contributions are as follows:

\begin{itemize}
	\item We formally define the \emph{AoI-aware route planning problem} and prove that determining whether a route with an MAI no more than a certain threshold can be constructed is NP-Complete. This proof involves a reduction from the Hamiltonian Path Problem, in Section~\ref{sec:problem_definition}.
	 	    
    \item We show that for any given patrolling route, the data packet with the maximum AoI in the worst case originates from the first data node visited.		
    Based on this observation, we propose a route design approach, \emph{Shortest Round Trip Time} (SRTT), in Section~\ref{sec:shortest_route}, and prove that \emph{SRTT} achieves an approximation ratio of $1.5$ in terms of MAI.

	\item In Section~\ref{sec:edge_enforcement}, we first demonstrate that a route with a shorter round trip time does not necessarily result in a lower MAI.
	We then propose an alternative approach, \emph{Edge Enforcement} (Enforced), which also provides an approximation ratio of $1.5$ for the MAI.

    \item Our empirical results in Section~\ref{sec:eval} demonstrate that the proposed approaches generate near-optimal routes in terms of MAI within reasonable time in various scenarios.
\end{itemize}

\section{Background and Related Works}

\subsection{Piggyback Networks}

Piggyback networks are a type of delay-tolerant network that leverages the mobility of mobile nodes to relay data packets from the source to the destination.
A key feature of Piggyback networks is the \textit{Store-Carry-Forward} (SCF) communication paradigm, where data packets are temporarily stored at intermediate nodes and then transported to their destinations by mobile nodes.
Kuwata et. al.~\cite{9700436} focused on minimizing the time required for all autonomous mobilities to finish delivering their assigned data and proposed a local search algorithm that iteratively improves data assignments.
Hasegawa et. al.~\cite{SoHasegawa2023} evaluated the data transfer performance of the Piggyback Network, demonstrating that under certain conditions, the throughput of the Piggyback Network can exceed that of traditional 5G and LTE networks. 
Yamamoto et. al.~\cite{DaisukeYamamoto20242023XBL0180} proposed a data transfer optimization algorithm to maximize utility for both autonomous mobilities and users.
Most of these studies focus on optimizing the throughput of systems where each data generation location is visited only once for data pickup.
In contrast, our work focuses on the freshness of data, specifically the AoI, for recurring data generation at the data nodes.

\subsection{Path Planning Problems}

Several path planning problems closely related to our work include the \emph{Traveling Salesman Problem} (TSP)~\cite{TSP}, the \emph{Traveling Repairman Problem} (TRP)~\cite{TRP}, and the \emph{Vehicle Routing Problem} (VRP)~\cite{VRP'59}.
Below, we provide a brief introduction to these problems and highlight the distinctions between them and our specific focus.

Given a set of nodes and the distances between each pair of nodes, the \emph{Traveling Salesman Problem} (TSP)~\cite{TSP} seeks to find the shortest route that visits each node exactly once and returns to the starting node.
Known as an NP-hard problem, exact solutions are computationally intensive; for instance, the Held-Karp algorithm~\cite{bellman1962dynamic} can solve the TSP in $O(n^2 2^n)$ time using dynamic programming.
To approximate solutions efficiently, various heuristics and approximation algorithms have been developed, such as the Nearest Neighbor algorithm, the Christofides algorithm~\cite{christofides2022worst}, and the Lin–Kernighan heuristic~\cite{LKHeuristic}.
Our problem differs from the TSP, as demonstrated in Section~\ref{sec:edge_enforcement}, where we show that a shorter route does not necessarily result in a smaller AoI.

The \emph{Traveling Repairman Problem} (TRP)~\cite{TRP}, also known as the delivery man problem or the minimum latency problem, is an extension of the TSP in combinatorial optimization.
In the TRP, given a set of customers located at different points, the objective is to find a route for a repairman that minimizes the sum of delays for reaching each customer.
This problem is NP-hard for general graphs.
The TRP differs from our problem as the TRP aims to minimize the time it takes to reach each customer, optimizing for delay.

The vehicle routing problem (VRP)~\cite{VRP'59} is a NP-Hard combinatorial optimization problem which generalizes the TSP.
It involves determining the optimal set of routes for a fleet of vehicles to serve a given set of customers.
There are many variants of the VRP, such as the capacitated VRP, the VRP with time windows, and the VRP with pickup and delivery.
Our problem can be viewed as a variant of the VRP, where there is a single vehicle (the drone) tasked with collecting data recurrently generated at the nodes.
However, our problem diverges from traditional VRP variants because it involves repetitive data generation at each node.
This repetitive data generation introduces a dynamic element that is not typically considered in standard VRP formulations, where nodes are visited only once.

\section{System Model}

We first present the system model of a patrolling data collection drone framework in Section~\ref{sec:model_component}.
In Section~\ref{sec:AoI}, we define \emph{Age-of-Information} for individual data packets and introduce the concept of \emph{Maximum Age-of-Information} (MAI) for both data nodes and the route.

\subsection{System Components}
\label{sec:model_component}

In a patrolling data collection drone framework, data nodes deployed in a confined area \emph{recurrently} generate data packets that must be collected and transported to a server for aggregation and analysis.
These data nodes generally remain stationary, but can be relocated if necessary.
A data collection drone patrols the area along a predetermined \textit{patrolling route}, collecting data packets from these nodes and transporting them back to the server.
Due to the limited transmission range of high-frequency wireless communication, we assume that data exchange can only occur when the drone is positioned directly above a node.

Given a server node and a set of $N$ stationary data nodes, the system can be modeled as a complete graph $G = (V,E)$, where $V$ represents the set of nodes and $E$ as the set of undirected edges, each weighted by the \textit{minimum direct travel times} between every pair of nodes.
The \emph{minimum direct travel time} $t_{i,j}$ between nodes $i$ and $j$ is computed based on the distance between the nodes and the drone's maximum possible speed.
All edges are assumed to be nonnegative, symmetric, i.e., $t_{i,j}  = t_{j,i}$, and follow the triangle inequality, i.e., $t_{i,j} \leq t_{i,k} + t_{k,j}$ for all $i, j, k \in V$.

A patrolling route $R$ is a sequence of nodes forming a Hamiltonian circuit in the graph $G$ that starts and ends at the server node $v_0$.
Without loss of generality, we denote the data nodes as $v_{R,1}, v_{R,2}, \ldots, v_{R,N}$ in the order they are visited in the patrolling route $R$ in the remaining sections.
In this study, we specifically address scenarios where each data node is visited \textbf{exactly once} within a route.

Given a patrolling route $R$, the travel time between two nodes $i$ and $j$ \textit{along the route} is denoted as $T_{R}(i, j)$.
Due to the high data rate of high-frequency wireless communication technology, e.g., up to $100~Gbps$ for mmWave IEEE 802.11.ay~\cite{802.11ay}, the data exchange time between the drone and a data node can be considered negligible compared to the travel time.
This enables the drone to collect all accumulated data packets at a data node upon arrival and immediately proceed to the next node without pausing.
Thus, we have:
\begin{equation}
	T_{R}(i, j) = \sum_{k=i}^{j-1} t_{k, k+1},
\end{equation}
where $t_{k, k+1}$ is the minimum direct travel time between node $k$ and node $k+1$.
The \emph{round trip time} of a route $R$, denote as $T_{R}$, is the total travel time from the server node to the last visited data node and back to the server.
That is,
\begin{equation}
    T_R = \sum_{i=0}^{N-1} t_{i, i+1} + t_{N, 0}
\end{equation}

\subsection{Age-of-Information}\label{sec:AoI}

The \emph{Age-of-Information} (AoI) of a data packet is defined as the duration between the generation of the data packet and its processing at the server.
AoI reflects the freshness of the data when being analyzed at the server.
For simplicity, we assume that data packets are processed immediately upon their arrival at the server.
Thus, the AoI of a data packet consists of the time it spends waiting to be collected by the drone and the duration it takes for the drone to traverse back to the server.
For data packets generated at data node $v_{R,i}$, we formulate the AoI as follows:
\begin{equation}\label{eq:AoI}
	\text{AoI}_{R}(i) = (T_{R}(i,0) - \Delta) + T_{R}(0,i) + T_{R}(i,0),
\end{equation}
where $\Delta$ is the time between the last visit of the drone at $v_{R,i}$ and the generation of the data packet.

The \emph{Maximum Age-of-Information} (MAI) of a patrolling route $R$ is defined as the largest possible AoI among all data packets arriving at the server.
For data packets generated at data node $v_{R,i}$, the packet with the largest AoI is generated \textbf{immediately after} the drone departs from $v_{R,i}$ (i.e., $\Delta \rightarrow 0$) in the worst-case scenario.
Consequently, this packet must wait for the drone to complete its patrol, return to $v_{R,i}$, and then complete the remaining route back to the server.
The MAI for data node $v_{R,i}$, representing the largest possible AoI of data packets generated at $v_{R,i}$, can be formulated as follows:
\begin{equation}\label{eq:MAI}
    \begin{split}
        \text{MAI}_{R}(i) &= T_{R}(i,0) + T_{R}(0,i) + T_{R}(i,0) \\
		&= T_{R} + T_{R}(i, 0)
	\end{split}
\end{equation}
Since the MAI of a patrolling route $R$ is the largest possible AoI among all data nodes, we have:
\begin{equation}
	\text{MAI}_{R} = \max_{i}(\text{MAI}_R(i))
\end{equation}

\section{AoI-Aware Route Planning}

We explore strategies for designing a patrolling route with optimal MAI in this section.
We begin by defining the \emph{AoI-aware route planning problem} in Section~\ref{sec:problem_definition} and proving that its decision version is NP-Complete.
Given this complexity, we present two approaches with approximation guarantees rather than solving the problem optimally.
Our initial approach, which focuses on minimizing the shortest round trip time, is presented in Section~\ref{sec:shortest_route}.
This approach achieves an approximation ratio of $1.5$ with respect to MAI.
In Section~\ref{sec:edge_enforcement}, we first show that a route with a shorter round trip time does not necessarily result in a lower MAI.
We then propose an alternative approach, \emph{Edge Enforcement}, which also provides an approximation ratio of $1.5$ for MAI.

\subsection{Problem Definition}\label{sec:problem_definition}
Ensuring the freshness of data is critical for the effectiveness of patrolling drone systems.
Given a complete graph $G = (V, E)$, where $V$ is the set of nodes and $E$ is the set of undirected edges with weights corresponding to the minimum direct travel times between each pair of nodes, satisfying the triangle inequality, our goal is to design a patrolling route $R$ for the drone that minimizes the MAI.
Formally, our objective function is:
\begin{equation}
    \argmin_{R} (\text{MAI}_{R}) = \argmin_{R} (\max_{i} (\text{MAI}_{R}(i)))
\end{equation}
Recall that a patrolling route $R$ is a sequence of nodes forming a Hamiltonian circuit in $G$ that starts and ends at the server node $v_0$.
By applying a patrolling route with an optimal MAI, we can ensure that data carried by the drone arrives at the server node with a certain level of freshness.

We have the following lemma for the MAI of a route:
\begin{lemma}\label{lmm:first_node}
	Given a patrolling route $R$, the MAI of the route comes from the data packets generated at the first visited data node $v_{R,1}$.
	That is, 
	\begin{equation}
		\text{MAI}_{R} = \max_{i} (\text{MAI}_{R}(i)) = \text{MAI}_{R}(1) \nonumber
	\end{equation}	
\end{lemma}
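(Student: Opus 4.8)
The plan is to show that $\text{MAI}_R(i)$ is maximized at $i=1$ by proving that $\text{MAI}_R(i)$ is non-increasing in $i$, i.e., $\text{MAI}_R(1) \geq \text{MAI}_R(2) \geq \cdots \geq \text{MAI}_R(N)$. By Equation~\eqref{eq:MAI}, $\text{MAI}_R(i) = T_R + T_R(i,0)$, and since the round trip time $T_R$ is a constant independent of $i$, it suffices to show that $T_R(i,0)$ is non-increasing in $i$, i.e., that the travel time along the route from $v_{R,i}$ back to the server is largest when $i=1$.

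First I would write $T_R(i,0)$ explicitly as the sum of edge weights along the route from $v_{R,i}$ through the remaining nodes back to $v_0$, namely $T_R(i,0) = \sum_{k=i}^{N-1} t_{k,k+1} + t_{N,0}$ in the route indexing. Then $T_R(i,0) - T_R(i+1,0) = t_{i,i+1} \geq 0$ because all edge weights are nonnegative by assumption. This immediately gives $T_R(i,0) \geq T_R(i+1,0)$ for every $i \in \{1,\dots,N-1\}$, hence $T_R(1,0) \geq T_R(i,0)$ for all $i$, and adding the constant $T_R$ to both sides yields $\text{MAI}_R(1) \geq \text{MAI}_R(i)$ for all $i$. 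Taking the maximum over $i$ then gives $\text{MAI}_R = \max_i \text{MAI}_R(i) = \text{MAI}_R(1)$.

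There is essentially no hard part here: the statement follows directly from the nonnegativity of edge weights together with the closed form $\text{MAI}_R(i) = T_R + T_R(i,0)$ derived in Equation~\eqref{eq:MAI}. The only point requiring a little care is bookkeeping the route indices consistently — making sure that ``the first visited data node'' $v_{R,1}$ corresponds to the largest value of $T_R(i,0)$ and that the telescoping difference $T_R(i,0) - T_R(i+1,0)$ is exactly the single edge weight $t_{i,i+1}$ under the route ordering. I would state this indexing convention once at the start of the proof and then the inequality chain is immediate.
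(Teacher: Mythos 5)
Your proposal is correct and follows essentially the same argument as the paper: decompose $\text{MAI}_R(i) = T_R + T_R(i,0)$, note $T_R$ is constant, and observe $T_R(1,0) \geq T_R(i,0)$. The paper simply asserts this last inequality, whereas you justify it explicitly via the telescoping difference and nonnegativity of edge weights, which is a slightly more careful version of the same proof.
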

\begin{proof}
	According to Equation~\eqref{eq:MAI}, the MAI of a data node is the sum of the round trip time $T_R$ and the travel time from the node back to the server.
	Since $T_R$ remains the same for all data nodes in a given route, and $T_{R}(1, 0)$ is larger than $T_{R}(i, 0)$ for every $i > 1$, the MAI of the first visited data node must be the largest among all data nodes, which by definition is the MAI of the route.
\end{proof}

With the above lemma, we prove that the decision version of the \emph{AoI-aware route planning problem} is NP-Complete in the following theorem.
Specifically, the decision version of the problem is to \textit{determine whether there exists a patrolling route with an MAI no more than a given threshold}.

\begin{thm}\label{thm:MAI_NP_C}
	Determine whether there exists a patrolling route with an MAI no more than a given threshold is NP-Complete.
\end{thm}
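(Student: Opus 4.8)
The plan is to prove NP-Completeness in the standard two parts: membership in NP, and NP-hardness via a polynomial-time reduction from the Hamiltonian Path Problem, as the authors announce in the introduction. For membership in NP, the certificate is simply the patrolling route $R$ itself, i.e., an ordering of the $N$ data nodes. Given such an ordering, one computes $T_R$ and $T_R(1,0)$ in polynomial time, invokes Lemma~\ref{lmm:first_node} to obtain $\text{MAI}_R = \text{MAI}_R(1) = T_R + T_R(1,0)$, and checks whether this value is at most the given threshold. This is clearly polynomial, so the problem lies in NP.

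For NP-hardness, I would reduce from the Hamiltonian Path Problem: given an undirected graph $H = (V_H, E_H)$ with $n$ vertices, decide whether $H$ has a Hamiltonian path. The construction builds an instance of the AoI-aware route planning problem on a complete graph $G$ whose node set is $V_H$ together with a fresh server node $v_0$. Edge weights encode adjacency in $H$: set $t_{i,j} = 1$ if $\{i,j\} \in E_H$ and $t_{i,j} = 2$ otherwise, for $i,j \in V_H$; the weights incident to $v_0$ should be chosen uniformly (say $t_{v_0,j} = 1$ for all $j \in V_H$, or some constant $c$) so that the cost of the route is governed entirely by the path through $V_H$. One must verify these weights satisfy the triangle inequality — with weights in $\{1,2\}$ (and equal server edges) this is immediate since $1 \le 1+1$ and $2 \le 1+1$. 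A patrolling route visits $v_0$, then the $n$ data nodes in some order $v_{R,1},\dots,v_{R,n}$, then returns to $v_0$; the internal segment $v_{R,1}\to\cdots\to v_{R,n}$ is exactly a Hamiltonian path in the complete graph on $V_H$, and its total weight is $n-1$ (the minimum possible) precisely when every consecutive pair is an edge of $H$, i.e., when the ordering is a Hamiltonian path of $H$.

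The key step is then to set the threshold so that $\text{MAI}_R = T_R + T_R(1,0)$ falls below it if and only if the route corresponds to a Hamiltonian path of $H$. Here is the subtlety, and what I expect to be the main obstacle: the MAI depends not only on $T_R$ (which wants the internal path to have weight $n-1$) but also on $T_R(1,0)$, the return distance from the first-visited node back through the whole route to the server. Simply minimizing $T_R$ does not pin down $T_R(1,0)$, so one must design the server edges and possibly add gadget nodes so that, conditional on $T_R$ being minimized, $T_R(1,0)$ is forced to a predictable value (ideally $T_R(1,0) = T_R - t_{v_0, v_{R,1}}$, which is automatic, so really the quantity is $2T_R - t_{v_0,v_{R,1}}$, and with uniform server edges $t_{v_0,v_{R,1}} = c$ this becomes $2T_R - c$). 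Thus $\text{MAI}_R = 2T_R - c$ is a strictly increasing function of $T_R$ alone, and with uniform server edges $T_R$ is minimized exactly when the internal Hamiltonian path has weight $n-1$. Setting the threshold to $2(2c + (n-1)) - c = 3c + 2(n-1)$ then yields: $H$ has a Hamiltonian path $\iff$ there is a route with $\text{MAI}_R \le 3c + 2(n-1)$.

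Finally I would wrap up by noting the reduction is computable in time polynomial in $|V_H| + |E_H|$ (it just writes down an $(n+1)\times(n+1)$ weight matrix), that the triangle inequality holds, and that correctness follows from the chain of equivalences above together with Lemma~\ref{lmm:first_node}. Combined with membership in NP, this establishes NP-Completeness. The one place I would be careful in the write-up is making the $T_R(1,0)$ dependence fully explicit — it is tempting to wave it away by appealing to the round-trip interpretation, but since the statement quantifies over all routes one should spell out that with uniform server-edge weights the MAI is a monotone function of $T_R$, so no route can "cheat" by trading a longer internal path for a shorter return leg.
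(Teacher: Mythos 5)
Your proposal is correct and follows essentially the same route as the paper: the same $\{1,2\}$ edge weights encoding adjacency, unit-weight server edges, and the threshold $2N+1$ (your $3c+2(n-1)$ with $c=1$), with the same equivalence argument. Your explicit observation that uniform server edges make $\text{MAI}_R = 2T_R - c$ a monotone function of $T_R$ is a nice clarification of why the reduction cannot be ``cheated,'' but it is the same proof.
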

\begin{proof}
       Validating whether a patrolling route with an MAI no more than a given threshold can be done in polynomial time. Hence, the problem is in NP.
	We prove the hardness by reducing from the \emph{Hamiltonian Path problem}, a well-known NP-Complete problem~\cite{Karp1972}, to this decision problem.
	Consider a non-complete undirected graph $G^* = (V^*, E^*)$ with $N$ vertices as an instance of the Hamiltonian Path problem.
	If a Hamiltonian path exists in $G^*$, then there is a path that visits every vertex in $V^*$ \emph{exactly once}.

	We can construct an instance $G = (V, E)$ of the \emph{AoI-aware route planning problem} from $G^*$ in polynomial time as follows:
        \begin{itemize}
        \item $V = V^* \cup \{u\}$, where $u$ is the server node.
        \item For any two nodes $i$ and $j$ in $V^*$, if $(i, j)$ is
          in $E^*$, then $t_{i,j}$ is set to $1$; otherwise, $t_{i,j}$
          is set to $2$. For the vertex $u$, let $t_{u, i}$ be $1$
          for every $i \in V^*$. These edges form the set $E$. We
          note that these edges are nonnegative, symmetric, and follow the triangle inequality.
        \end{itemize}        
	We then prove that there exists a patrolling route with an MAI no more than $2N+1$ in $G$ if and only if there exists a Hamiltonian path in $G^*$.	
	\begin{itemize}
		\item If a Hamiltonian path exists in $G^*$, we can construct a patrolling route in $G$ that starts from $u$ and visits each node in $V$ exactly once, and return to $u$.
		Since the weight of the corresponding Hamiltonian path in $G$ is $N-1$, according to Lemma~\ref{lmm:first_node}, the MAI of the route is $(1 + N-1 + 1) + (N-1 + 1) = 2 \cdot N + 1$.
        \item If there exists a patrolling route with an MAI no more than $2N+1$ in $G$, this route must be a patrolling route of $G$ in which each edge has a cost of $1$.
		By removing the two edges connected to $u$ in this route, the remaining is a path which visits every vertex \emph{exactly once}.
		This path in the corresponding graph $G^*$ is a Hamiltonian path.
	\end{itemize}	
\end{proof}

\subsection{Shortest Round Trip Time}\label{sec:shortest_route}

According to Theorem~\ref{thm:MAI_NP_C}, the decision version of the \emph{AoI-aware route planning problem} is NP-Complete.
To address this challenge, we propose two approaches with approximation guarantees to address the \emph{AoI-aware route planning problem}.
Our first approach, \emph{shortest round trip time} (SRTT), aims to minimize the round trip time of the route.
The intuition behind this approach is that the MAI of a route is largely influenced by the round trip time. 
This can be observed by rewriting Equation (4) into the following form:
\begin{equation}
    \begin{split}
        \text{MAI}_R(i)	&= T_R + T_{R}(i,0) \\
                        &= T_R + (T_R - T_{R}(0,i)) \\						
						&= 2 \cdot T_R - T_{R}(0,i),
    \end{split}
    \label{eq:MAI_2}
\end{equation}
where $T_{R}(0,i)$ is the travel time from the server to the $i$-th visited data node.
Thus, reducing the round trip time should ideally result in a smaller MAI.

Finding a route with the shortest round trip time is equivalent to solving the Traveling Salesman Problem (TSP)~\cite{lawler1985travelling}, which is a well-known NP-hard problem.
Various heuristics and approximation algorithms exist for solving the TSP.
In this work, we employ the Christofides algorithm~\cite{christofides2022worst} to generate a route for the drone.
The algorithm guarantees a solution within $1.5$ times the optimal route length.

Our SRTT approach, based on the Christofides algorithm, works as follows.
Given a complete graph $G = (V, E)$, the algorithm first constructs a minimum spanning tree (MST) that connects all nodes with the minimum total edge weight.
Recall that $V$ is the set of ground nodes, and $E$ is the set of edges with the minimum direct travel times between each pair of nodes as weights.
Two common algorithms for constructing an MST are \emph{Prim's algorithm}~\cite{6773228} and \emph{Kruskal's algorithm}~\cite{kruskal1956shortest}.
Our approach employs Prim's algorithm to build the MST, starting from the server node $v_0$ and iteratively adding the shortest edge that connects the current tree to an unvisited node.

Once the MST is established, the algorithm identifies the odd-degree vertices within the MST and finds the \emph{minimum-weight perfect matching} for these vertices.
By combining the MST with this matching, the algorithm constructs an Eulerian circuit that visits each edge in the combined graph.
This circuit is then converted into a TSP tour by removing repeated nodes, ensuring that each node is visited exactly once.

The following lemma provides an important subroutine for the analysis of our algorithms in this paper.

\begin{lemma}\label{lmm:perfect-matching}
  Given a complete graph $G = (V, E)$, the cost of the minimum-weight
  perfect matching for the odd-degree vertices for the nodes of the
  spanning tree in Christofides algorithm is at most $0.5$ of the
  optimal TSP tour.
\end{lemma}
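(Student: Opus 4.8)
The plan is to adapt the classical argument from the analysis of Christofides' algorithm, which bounds the minimum-weight perfect matching on the odd-degree vertices by half the cost of an optimal TSP tour. First I would fix an optimal TSP tour $R^*$ (i.e., a Hamiltonian circuit of $G$ of minimum total edge weight) and let $O \subseteq V$ denote the set of odd-degree vertices of the spanning tree constructed in the algorithm; note that $|O|$ is even, since the sum of all vertex degrees in any graph is even. The key step is to take the optimal tour $R^*$ and ``shortcut'' it over the vertices not in $O$: traversing $R^*$ and skipping every vertex outside $O$ yields a cyclic sequence $v_{i_1}, v_{i_2}, \ldots, v_{i_{|O|}}$ visiting exactly the vertices of $O$, i.e., a Hamiltonian cycle $C$ on $O$. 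By the triangle inequality, each edge $(v_{i_k}, v_{i_{k+1}})$ of $C$ costs at most the total weight of the sub-path of $R^*$ it replaces, so the total cost of $C$ is at most the cost of $R^*$.

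Next I would observe that the cycle $C$ on the even number of vertices $|O|$ decomposes into two disjoint perfect matchings on $O$: taking alternate edges around $C$ gives matching $M_1$ (the odd-indexed edges) and matching $M_2$ (the even-indexed edges), with $\mathrm{cost}(M_1) + \mathrm{cost}(M_2) = \mathrm{cost}(C)$. Hence at least one of $M_1, M_2$ has cost at most $\tfrac{1}{2}\mathrm{cost}(C) \le \tfrac{1}{2}\mathrm{cost}(R^*)$. Since the minimum-weight perfect matching on $O$ has cost no larger than either $M_1$ or $M_2$, its cost is at most $\tfrac{1}{2}\mathrm{cost}(R^*)$, which is what we want: the matching cost is at most $0.5$ times the optimal TSP tour length.

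The main obstacle — really the only subtlety — is being careful about which ``tour'' the statement refers to and making sure the shortcutting step is valid. The shortcutting argument requires the triangle inequality, which the paper has already assumed for all edge weights, so this is available. A minor point to handle cleanly is the degenerate case $|O| = 0$ (the spanning tree is already a path or, more generally, has no odd-degree vertices), in which case the empty matching has cost $0$ and the bound holds trivially; and the case $|O| = 2$, where $C$ is a single edge used twice, so $M_1 = M_2 = C$ and the argument still goes through. I would also note explicitly that the optimal TSP tour on $G$ coincides with the shortest round trip time $T_{R}$ over all patrolling routes $R$ (since a patrolling route is exactly a Hamiltonian circuit through $v_0$, and all Hamiltonian circuits of $G$ pass through $v_0$), so ``optimal TSP tour'' here is unambiguous and equals $\min_R T_R$. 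The whole proof is a few lines once the shortcutting-and-alternating-edges template is in place.
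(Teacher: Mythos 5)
Your proof is correct, and it is the standard argument: shortcut the optimal tour onto the odd-degree set $O$, split the resulting cycle into two alternating perfect matchings, and take the cheaper one, using the triangle inequality (which the paper assumes) to justify the shortcutting. The paper, however, does not prove this lemma at all --- its entire ``proof'' is a one-line citation to Lemma~1 of the Christofides reference. So you have not taken a different mathematical route so much as supplied the classical proof that the paper outsources; what your version buys is self-containedness and an explicit check that the paper's hypotheses (triangle inequality, completeness of $G$) are exactly what the argument needs. One small quibble in your degenerate-case discussion: $|O|=0$ cannot actually occur for a spanning tree on two or more vertices, since every such tree has at least two leaves, and a path has exactly two odd-degree vertices (its endpoints), not zero; this does not affect the argument, since the smallest real case is $|O|=2$, which you handle correctly.
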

\begin{proof}
  This comes from Lemma 1 in~\cite{christofides2022worst}.
\end{proof}

The final step involves converting the undirected TSP tour into a directed patrolling route.
This is achieved by selecting one of the two nodes connected to the server node $v_0$ in the TSP tour as the first visited node.
According to Equation~\eqref{eq:MAI_2}, the MAI is twice the round trip time minus the travel time from the server to the first visited data node.
To minimize the MAI, we choose the node with the longer direct travel time from the server node as the first node to be visited.
This selection process ensures that the resulting directed patrolling route achieves a smaller MAI.

\medskip
\noindent \textbf{\textit{Approximation Ratio}}: Since our approach uses the Christofides algorithm, which provides a $1.5$-approximation for the TSP, the resulting patrolling route may not be optimal in terms of both round trip time and MAI.
Therefore, we derive the approximation ratio of \emph{SRTT} in terms of the MAI with the following theorem:

\begin{thm}\label{thm:SRTT_approx}
	The Shortest Round Trip Time (SRTT) approach provides an approximation ratio of $1.5$ in terms of MAI.
\end{thm}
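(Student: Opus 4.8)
The plan is to pin the instance down with two reference quantities: $\tau$, the weight of a shortest TSP tour on $G$, and $\mu$, the weight of a minimum spanning tree of $G$. I will upper-bound $\text{MAI}_{R}$ for the route $R$ returned by SRTT and lower-bound the optimal value $\text{MAI}^\ast$, both in terms of $\tau$ and $\mu$, and then verify that $(2\mu+\tau)/(\mu+\tau)\le\tfrac32$.

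For the upper bound, the key is to keep the minimum spanning tree and the matching separate rather than collapsing the Christofides guarantee to $T_{R}\le 1.5\tau$ (that shortcut is too lossy and only yields ratio $3$). The Euler tour on the union of the MST and a minimum-weight perfect matching of the MST's odd-degree vertices has weight $\mu + w(\text{matching})$, and $w(\text{matching})\le\tfrac12\tau$ by Lemma~\ref{lmm:perfect-matching}, while the shortcutting step cannot increase the weight because $G$ obeys the triangle inequality; hence $T_{R}\le\mu+\tfrac12\tau$. Substituting into Equation~\eqref{eq:MAI_2} and discarding the nonnegative term $T_{R}(0,1)$ gives $\text{MAI}_{R}=2T_{R}-T_{R}(0,1)\le 2T_{R}\le 2\mu+\tau$.

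For the lower bound, let $R^\ast$ be a route attaining $\text{MAI}^\ast$. By Lemma~\ref{lmm:first_node} and Equation~\eqref{eq:MAI}, $\text{MAI}^\ast=T_{R^\ast}+T_{R^\ast}(1,0)$. The first term is the weight of the Hamiltonian circuit $R^\ast$, so $T_{R^\ast}\ge\tau$; the second term is the weight of the Hamiltonian path from the first visited data node through the remaining nodes back to $v_0$, which is a spanning tree of $G$, so $T_{R^\ast}(1,0)\ge\mu$. Hence $\text{MAI}^\ast\ge\tau+\mu$.

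Combining the two bounds yields $\text{MAI}_{R}/\text{MAI}^\ast\le(2\mu+\tau)/(\mu+\tau)$, and cross-multiplying shows this is at most $\tfrac32$ exactly because $\mu\le\tau$ --- the latter holding since deleting the heaviest edge of an optimal tour leaves a spanning tree. The one subtle point, and where a naive argument loses a factor of two, is the \emph{pairing} of the bounds: one must see that the optimal objective itself decomposes into a ``closed tour'' part ($\ge\tau$) plus a ``spanning path'' part ($\ge\mu$), matching the $\mu+\tfrac12\tau$ decomposition of the Christofides route; once that is in place the rest is arithmetic. I would also remark that the ``start at the farther server-neighbour'' refinement of SRTT, useful in the experiments, is irrelevant to the worst-case ratio, since the term it saves is simply discarded in the bound above.
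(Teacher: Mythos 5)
Your proposal is correct and follows essentially the same argument as the paper: the same upper bound $T_R \le W_{MST} + \tfrac12 W_{TSP}$ from the Christofides decomposition, the same lower bound $\text{MAI}^\ast \ge W_{TSP} + W_{MST}$, and the same final ratio $(2W_{MST}+W_{TSP})/(W_{MST}+W_{TSP}) \le 1.5$ via $W_{MST} \le W_{TSP}$. Your justification of the lower bound (the return path from the first visited node is a Hamiltonian path, hence a spanning tree, so its weight is at least $\mu$) is slightly more explicit than the paper's, but the approach is identical.
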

\begin{proof}
	Given a complete graph $G = (V, E)$, where $V$ is the set of ground nodes and $E$ is the set of edges with the minimum direct travel times between each pair of nodes as weights, we define $W_{MST}$ as the weight sum of the minimum spanning tree of $G$, and $W_{TSP}$ as the weight sum of the optimal TSP tour in $G$.
	By definition, $W_{MST} \leq W_{TSP}$, as the removal of an edge of a TSP tour yields a spanning tree.

	Recall that the MAI is the round trip time of the route plus the travel time from the first visited data node back to the server. Therefore, the optimal MAI must be at least $W_{TSP} + W_{MST}$.
        
	In \emph{SRTT}, we employ the Christofides algorithm, which combines the minimum spanning tree (with a cost of $W_{MST}$) and the minimum-weight perfect matching (with a cost upper bounded by $0.5 W_{TSP}$ due to Lemma~\ref{lmm:perfect-matching}) to create an Eulerian circuit.
	A patrolling route $R$ is derived from the Eulerian circuit by taking shortcuts.
	Since edges in $G$ follow the triangle inequality, the round trip time of the route $T_R$ (i.e., the weight sum of the route $R$) is guaranteed to be upper bounded by the cost of the minimum spanning tree and the minimum-weight perfect matching, i.e., $W_{MST} + 0.5 \cdot W_{TSP}$.
	
	Nevertheless, since the MAI is at most twice the round trip time according to Equation~\eqref{eq:MAI_2}, the MAI of the route $R$ generated by the SRTT approach is at most $2 \cdot (W_{MST} + 0.5 \cdot W_{TSP}) = 2 \cdot W_{MST} + W_{TSP}$.
	Therefore, \emph{SRTT} provides an approximation ratio of
	\begin{equation}
		\frac{2 \cdot W_{MST} + W_{TSP}}{W_{MST} + W_{TSP}} = 1 + \frac{W_{MST}}{W_{MST} + W_{TSP}}\leq 1.5,
    \end{equation}
    where the inequality is due to $W_{MST} \leq W_{TSP}$.	
\end{proof}

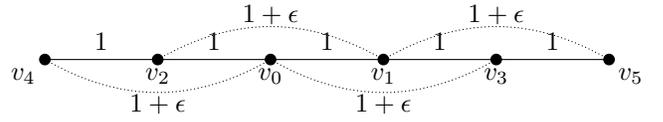
\begin{figure}
	\centering

\begin{tikzpicture}
    \coordinate (v4) at (0,0);
    \coordinate (v2) at (1.5,0);
    \coordinate (v0) at (3,0);
    \coordinate (v1) at (4.5,0);
    \coordinate (v3) at (6,0);
    \coordinate (v5) at (7.5,0);

    \filldraw (v4) circle (2pt) node[anchor=north east] {$v_4$};
    \filldraw (v2) circle (2pt) node[anchor=north] {$v_2$};
    \filldraw (v0) circle (2pt) node[anchor=north] {$v_0$};
    \filldraw (v1) circle (2pt) node[anchor=north] {$v_1$};
    \filldraw (v3) circle (2pt) node[anchor=north] {$v_3$};
    \filldraw (v5) circle (2pt) node[anchor=north west] {$v_5$};    
    
    \draw (v4) -- (v2) node[midway, above] {$1$};
    \draw (v2) -- (v0) node[midway, above] {$1$};
    \draw (v0) -- (v1) node[midway, above] {$1$};
    \draw (v1) -- (v3) node[midway, above] {$1$};
    \draw (v3) -- (v5) node[midway, above] {$1$};
        
    
    \draw [densely dotted] (v4) to [bend right=30] (v0);
    \node[] at (1.5,-0.6) {$1+\epsilon$};
    \draw [densely dotted] (v2) to [bend left=30] (v1);
    \node[] at (3,0.6) {$1+\epsilon$};
    \draw [densely dotted] (v0) to [bend right=30] (v3);
    \node[] at (4.5,-0.6) {$1+\epsilon$};
    \draw [densely dotted] (v1) to [bend left=30] (v5);
    \node[] at (6,0.6) {$1+\epsilon$};
\end{tikzpicture}

	\caption{A subfigure of the graph $G$ used for the tightness analysis of the proposed approaches. The minimum spanning tree of $G$ is shown in solid line. In this example, $t_{0,1} = t_{0,2} = t_{1,2} = 1$ and $t_{1,3} = t_{2,3} = 100$}
	\label{fig:tight_example}
	\vspace{-1em}
\end{figure}

This approximation ratio is tight as can be seen by the following example.
Let $v_0$ be the server node, $v_1, v_2, \ldots, v_{N}$ be the $N=2M+1$ data nodes, and $\epsilon$ be a positive real number close to zero.
For the input graph, set $t_{0,1} = t_{i,i+2} = 1, \forall i \in \{0, 1, \ldots, N-2\}$, and $t_{0,3} = t_{1,2} = t_{i,i+4} = 1+\epsilon, \forall i \in \{0, 1, \ldots, N-4\}$.
All remaining edges in the complete graph have distances given by the shortest paths within this subgraph.
Figure~\ref{fig:tight_example} illustrates this subgraph with $N=5$.

Given the above graph, \emph{SRTT} generates a route as follows:
The minimum spanning tree of $G$ includes all edges with weight $1$, resulting in $W_{MST} = N-1$.
With only two odd-degree vertices in the MST, the minimum-weight perfect matching consists of one edge, $(v_{N-1}, v_{N})$, with weight at most $(N-1)(1+\epsilon)/2 + 1$.
The union of the tree and the matching forms a cycle, with no possible shortcuts.
Thus, the MAI of this route is at most $(N-2)(1+\epsilon)+2N-1$.

Consider an alternative route which visits $v_0, v_{1}, v_{5}, \ldots, \allowbreak  v_2, v_0$.
The round trip time of this route is at least $1+(N-1)(1+\epsilon)+1$, and resulting in an MAI of $2(N-1)(1+\epsilon) +3$.
As $\epsilon \rightarrow 0$, the ratio of the MAI derived from \emph{SRTT} to that of the alternative route is at least $\lim_{\epsilon \rightarrow 0}\frac{(N-2)(1+\epsilon)+2N-1}{2(N-1)(1+\epsilon) +3} = 3/2$ as $N \rightarrow \infty$.
Combined with Theorem~\ref{thm:SRTT_approx}, we conclude that the approximation ratio of $1.5$ for the SRTT approach is tight.

\subsection{Edge Enforcement}\label{sec:edge_enforcement}

\begin{figure}[tp]
	\centering
	\begin{subfigure}{.49\linewidth}
		\centering
		\begin{tikzpicture}



    \coordinate (A) at (0,0);
    \coordinate (B) at (0.5,0.13);
    \coordinate (C) at (0.13,0.5);
    \coordinate (D) at (2,2); 
    
    \filldraw (A) circle (2pt) node[anchor=north east] {$v_0$};
    \filldraw (B) circle (2pt) node[anchor=north west] {$v_1$};
    \filldraw (C) circle (2pt) node[anchor=south east] {$v_2$};
    \filldraw (D) circle (2pt) node[anchor=south east] {$v_3$};
    

			\draw[->, thick] (A) -- (B);
			\draw[->, thick] (B) -- (D);
			\draw[->, thick] (D) -- (C);
			\draw[->, thick] (C) -- (A);
			\node[] at (0.25, -0.25) {1};
			\node[] at (1.5, 0.75) {100};
			\node[] at (0.5, 1.25) {100};
			\node[] at (-0.25, 0.25) {1};
		\end{tikzpicture}
		\caption{Shortest route}
	\end{subfigure}%
	\begin{subfigure}{.49\linewidth}
		\centering
		\begin{tikzpicture}



    \coordinate (A) at (0,0);
    \coordinate (B) at (0.5,0.13);
    \coordinate (C) at (0.13,0.5);
    \coordinate (D) at (2,2); 
    
    \filldraw (A) circle (2pt) node[anchor=north east] {$v_0$};
    \filldraw (B) circle (2pt) node[anchor=north west] {$v_1$};
    \filldraw (C) circle (2pt) node[anchor=south east] {$v_2$};
    \filldraw (D) circle (2pt) node[anchor=south east] {$v_3$};
    

			\draw[->, thick] (A) -- (D);
			\draw[->, thick] (D) -- (B);
			\draw[->, thick] (B) -- (C);
			\draw[->, thick] (C) -- (A);
			\node[] at (0.33, 0.55) {1};
			\node[] at (1.5, 0.75) {100};
			\node[] at (0.75, 1.25) {$t_{0,3}$};
			\node[] at (-0.25, 0.25) {1};	
		\end{tikzpicture}
		\caption{Route with the minimum MAI}
	\end{subfigure}	
	\caption{An example where the shortest route has a larger MAI compared to a route with a longer round trip time.}
	\label{fig:counter_example}
	\vspace{-1em}
\end{figure}
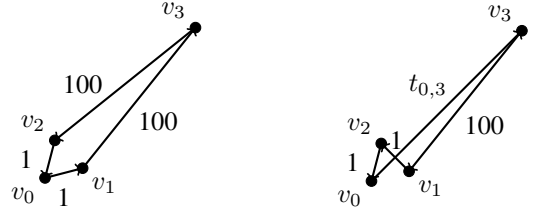

In our first approach, we attempt to minimize the round trip time of the route to reduce the MAI.
However, a shorter round trip time does not guarantee a smaller MAI.
Figure~\ref{fig:counter_example} illustrates an example where the shortest route has a larger MAI compared to a route with a longer round trip time.

Consider a scenario with a server node $v_0$ and three data nodes $v_1, v_2, v_3$.
The minimum direct travel times between nodes are $t_{0,1} = t_{0,2} = t_{1,2} = 1$, $t_{1,3} = t_{2,3} = 100$, and $t_{0,3} > t_{1,3}$.
By the triangle inequality, we know that $t_{0,3} < 101$.
The shortest route $R^{*}$ is $v_0 \rightarrow v_1 \rightarrow v_3 \rightarrow v_2 \rightarrow v_0$ with a round trip time of $202$ and an MAI of $403$.

Now consider another route $R: v_0 \rightarrow v_3 \rightarrow v_1 \rightarrow v_2 \rightarrow v_0$.
Since $t_{0,3} > t_{1,3} =100$, the round trip time $t_{0,3} + 102 > 202$ is longer than that of the shortest route $R^{*}$.
However, the MAI of this route is $2 \cdot (t_{0,3} + 102) - t_{0,3} = t_{0,3} + 204 < 305$, which is significantly smaller than the MAI of $R^{*}$.

This example demonstrates that a shorter route does not necessarily lead to a smaller MAI.
The MAI is affected not only by the round trip time of the route but also by the travel time between the server and the first visited data node.

To investigate the impact of the first edge on the MAI, we can rewrite the MAI of a route using Lemma~\ref{lmm:first_node} and Equation~\eqref{eq:MAI}.
The MAI can be expressed as the travel time from the server to the first visited node plus twice the travel time from the first visited node back to the server, i.e., 
\begin{equation}
	\text{MAI}_{R}(1) = T_{R} + T_{R}(1,0) = T_{R}(0,1) + 2 \cdot T_{R}(1,0). \nonumber
\end{equation} 
Since $T_{R}(0,1)$ equals to the minimum direct travel time from the server node $v_0$ to the first visited node $v_{R,1}$, which is a constant in $G$, the minimum MAI of route $R$ with a specified first edge can be achieved by minimizing $T_{R}(1,0)$.

\begin{thm}\label{thm:enforcenode}
	Suppose that $R_i$ is a patrolling route of $G$ with a specified edge $(v_0, v_i)$.
	If the remaining part of $R_i$ is the shortest Hamiltonian path from $v_i$ to $v_0$, then minimum MAI can be achieved by taking
	\[
	\min_{i} \left(t_{0,i} + 2 \cdot T_{R_i}(1,0)\right).
	\]
\end{thm}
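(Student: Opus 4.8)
The plan is to reduce the claim to an algebraic rewriting of the MAI followed by a case split on the first visited node. First I would invoke Lemma~\ref{lmm:first_node} together with Equation~\eqref{eq:MAI} to write, for any patrolling route $R$, $\text{MAI}_R = \text{MAI}_R(1) = T_R + T_R(1,0)$, and then split the round trip time along the route as $T_R = T_R(0,1) + T_R(1,0)$. Since on its first leg the drone flies directly from the server $v_0$ to the first visited data node $v_{R,1}$ without visiting any intermediate node, $T_R(0,1)$ equals the single edge weight $t_{0,v_{R,1}}$; hence $\text{MAI}_R = t_{0,v_{R,1}} + 2\,T_R(1,0)$.

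Next I would use the correspondence between patrolling routes and pairs $(v_i, P)$, where $v_i$ is the first visited data node and $P$ is a Hamiltonian path from $v_i$ to $v_0$ through all the remaining data nodes. Because $G$ is complete, every such pair induces a valid route and conversely, so minimizing $\text{MAI}_R$ over all routes is the same as first fixing the first node $v_i$, then optimizing over $P$, and finally optimizing over $v_i$. For a fixed $v_i$ the quantity $t_{0,i}$ is a constant of $G$, so among all routes whose first visited node is $v_i$ the MAI is minimized precisely when $T_R(1,0)$ --- the length of $P$ --- is minimal, i.e. when the remaining part of the route is a shortest Hamiltonian path from $v_i$ to $v_0$; this is exactly the route $R_i$ of the statement, and $\min_{R:\,v_{R,1}=v_i}\text{MAI}_R = t_{0,i} + 2\,T_{R_i}(1,0)$.

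Finally, taking the minimum of this expression over the $N$ possible choices of first node yields $\min_R \text{MAI}_R = \min_i (t_{0,i} + 2\,T_{R_i}(1,0))$, which is the claimed identity.

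I do not expect a genuine obstacle: the argument is essentially the rewriting $\text{MAI}_R = T_R(0,1) + 2\,T_R(1,0)$ plus a separation of the optimization over the first edge from the optimization over its tail. The only points that deserve a careful sentence each are (i) that the along-route time $T_R(0,1)$ coincides with the direct edge weight $t_{0,v_{R,1}}$ because no node lies on that leg, and (ii) that, with $v_i$ fixed, letting the tail of $R_i$ range over Hamiltonian paths from $v_i$ to $v_0$ really does enumerate all routes with first node $v_i$, so that choosing the shortest such path is optimal --- both immediate from $R$ being a Hamiltonian circuit through $v_0$ in a complete graph.
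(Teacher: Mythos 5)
Your proposal is correct and follows essentially the same route as the paper: the same decomposition $\text{MAI}_R = T_R(0,1) + 2\,T_R(1,0) = t_{0,i} + 2\,T_R(1,0)$ (which the paper derives just before the theorem from Lemma~\ref{lmm:first_node} and Equation~\eqref{eq:MAI}), followed by the observation that for a fixed first node $v_i$ the term $t_{0,i}$ is constant so the MAI is minimized by the shortest Hamiltonian path from $v_i$ to $v_0$, and finally a minimization over all choices of $v_i$. The only cosmetic difference is that the paper phrases the middle step as a (somewhat awkwardly worded) contradiction while you argue the monotonicity directly, which is if anything cleaner.
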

\begin{proof}
	We prove this theorem by contradiction.
	Assume that there exists a patrolling route $R_i$ with a specified edge $(v_0, v_i)$ such that the minimum MAI of $R_i$ is \textbf{not} achieved by the shortest Hamiltonian path from $v_i$ to $v_0$.
	We denote the shortest Hamiltonian path from $v_i$ to $v_0$ as $R^*_i$, where $T_{R^*_i}(1,0) \leq T_{R_i}(1,0)$.
	According to Equation~\eqref{eq:MAI}, the minimum MAI of $R_i$ equals to $t_{0,i} + 2 \cdot T_{R_i}(1,0) \geq t_{0,i} + 2 \cdot T_{R^{*}_{i}}(1,0)$, which contradicts our assumption.		
	Since the minimum MAI of $R_i$ is achieved by the shortest Hamiltonian path from $v_i$ to $v_0$, iterating through all possible nodes ensures that the minimum MAI can be found.
\end{proof}

Based on Theorem~\ref{thm:enforcenode}, we can use the following dynamic programming approach to find the optimal route with the minimum MAI.
The approach iteratively enforces each data node to be the first visited node in the route.
For each enforced node, the Held-Karp algorithm~\cite{bellman1962dynamic} is applied to solve the Shortest Hamiltonian Path problem with specified starting and ending nodes.
The Held-Karp algorithm computes the shortest distance from the starting node to all other nodes while passing through every other node as intermediate results.
By setting $v_i$ as the starting node and $v_0$ as the ending node, we can determine the length of the shortest Hamiltonian path between these two nodes.

Although the minimum MAI can be achieved using this dynamic programming approach, its time complexity grows exponentially with the number of nodes.
The Held-Karp algorithm has a time complexity of $O(2^N \cdot N^2)$, and it is triggered once for each of the enforced nodes.
Therefore, the overall time complexity of the dynamic programming approach is $O(2^N \cdot N^3)$, making it infeasible for large $N$.

\noindent \textbf{Our Approach}:
Instead of finding the optimal route with the minimum MAI in exponential time, we propose an approximation approach called \emph{Edge Enforcement} (Enforced).
The algorithm starts by enforcing an arbitrary edge connected to the server node and follows a sequence of steps to construct a route.
First, a spanning tree that includes the enforced edge is built.
This construction begins with the server node and the enforced edge, then incrementally incorporates the remaining nodes using Prim's algorithm.
Notably, the resulting spanning tree with the enforced edge will have a total weight as least as large as that of the minimum spanning tree.

Once the spanning tree with the enforced edge is established, the Christofides algorithm is applied to generate a patrolling route aimed at minimizing the round trip time. 
The algorithm proceeds by constructing a minimum-weight perfect matching among the odd-degree vertices within the spanning tree, combining the perfect matching with the spanning tree to form an Eulerian circuit, and converting the Eulerian circuit into a TSP tour.
Throughout this conversion, the enforced edge remains as the initial edge.

After generating a TSP tour, the algorithm forms two routes: one starting with the enforced edge and the other in the opposite direction.
The MAI for both routes is calculated, and compared with that of the current best route.
The route with the smallest MAI is kept as the new best route.
This process is repeated for each edge connected to the server node until all have been enforced at least once.

The time complexity of \emph{Enforced} is dominated by the Christofides algorithm, which is $O(|E| \log |V|)$ for constructing the minimum spanning tree and $O(|V|^3)$ for the minimum-weight perfect matching.
Since the algorithm is applied once for each edge connected to the server node, with at most $N$ such edges, the overall time complexity of \emph{Enforced} is $O(N^4)$.

\medskip
\noindent \textbf{\textit{Approximation Ratio}}:
We now derive the approximation ratio of \emph{Enforced} in terms of the MAI.

\begin{thm}
	\emph{Enforced} has an approximation ratio of $1.5$.
\end{thm}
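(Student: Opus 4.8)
The plan is to exploit the fact that \emph{Enforced} returns the best (minimum-MAI) route among a family of candidates, and to exhibit one specific candidate that already meets the $1.5$ bound established for \emph{SRTT}; the returned route is then no worse. No new machinery beyond the SRTT analysis is needed, except for one structural fact about Prim's algorithm.

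First I would reuse the lower bound from the proof of Theorem~\ref{thm:SRTT_approx}: letting $W_{MST}$ be the weight of a minimum spanning tree of $G$ and $W_{TSP}$ the weight of an optimal TSP tour, the optimal MAI is at least $W_{TSP} + W_{MST}$. Indeed, by Lemma~\ref{lmm:first_node} and Equation~\eqref{eq:MAI} the optimal MAI equals the round trip time of some route plus the travel time from its first visited node back to $v_0$; the former is at least $W_{TSP}$, and the latter is the weight of a Hamiltonian path, hence of a spanning tree, so it is at least $W_{MST}$.

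The key step is to identify the good candidate. Since $v_0$ has degree at least one in any minimum spanning tree $\mathcal{T}$ of $G$, there is an edge $(v_0, v_k) \in \mathcal{T}$, and \emph{Enforced} performs an iteration that enforces $(v_0, v_k)$ (every edge incident to $v_0$ is enforced at some iteration). I claim the spanning tree constructed in this iteration --- Prim's algorithm grown from the enforced edge --- has weight exactly $W_{MST}$: it is a minimum-weight spanning tree among all spanning trees containing $(v_0, v_k)$, so its weight is at most $W(\mathcal{T}) = W_{MST}$, while it is at least $W_{MST}$ trivially. Given this, the rest of the iteration transcribes the proof of Theorem~\ref{thm:SRTT_approx}: the tree is a minimum spanning tree, so Lemma~\ref{lmm:perfect-matching} bounds the minimum-weight perfect matching on its odd-degree vertices by $0.5\,W_{TSP}$; combining tree and matching into an Eulerian circuit and shortcutting via the triangle inequality yields a route $R$ with round trip time $T_R \le W_{MST} + 0.5\,W_{TSP}$; and by Equation~\eqref{eq:MAI_2}, $\text{MAI}_R \le 2 T_R \le 2 W_{MST} + W_{TSP}$.

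Since \emph{Enforced} outputs the route of smallest MAI over all iterations and both orientations, its MAI is at most $2 W_{MST} + W_{TSP}$; dividing by the lower bound gives a ratio of at most $(2 W_{MST} + W_{TSP})/(W_{MST} + W_{TSP}) = 1 + W_{MST}/(W_{MST} + W_{TSP}) \le 1.5$, using $W_{MST} \le W_{TSP}$, and tightness follows from the example in Figure~\ref{fig:tight_example}. The part I expect to require the most care is the claim that enforcing an MST edge at $v_0$ produces a spanning tree of weight exactly $W_{MST}$, i.e., that Prim's algorithm grown from a prescribed initial edge computes a minimum-weight spanning tree subject to containing that edge --- a routine consequence of the cut property (equivalently, greedily building an MST on the graph contracted along the enforced edge), but the one ingredient absent from the SRTT analysis, so it should be stated explicitly. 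Note that the orientation-selection and reverse-route features of \emph{Enforced} play no role in the worst-case bound, since the argument simply discards the $T_{R}(0,1)$ term of Equation~\eqref{eq:MAI_2}.
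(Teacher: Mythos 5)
Your proof is correct, and it rests on the same key insight as the paper's: \emph{Enforced} returns the minimum-MAI route over all its iterations, so it suffices to exhibit one iteration whose candidate already meets the $2W_{MST}+W_{TSP}$ bound from the SRTT analysis. The difference is in how that candidate is certified. The paper argues that the iteration enforcing the \emph{shortest} edge incident to $v_0$ reproduces SRTT's run exactly (Prim's algorithm started at $v_0$ picks that edge first), so the candidate route \emph{is} SRTT's route and the bound follows from Theorem~\ref{thm:SRTT_approx} as a black box. You instead take any MST edge incident to $v_0$, prove that constrained Prim's yields a tree of weight exactly $W_{MST}$ via the cut property, and re-run the SRTT arithmetic on that tree. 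Your route is marginally longer (it needs the constrained-Prim optimality fact, which you rightly flag as the one new ingredient) but also marginally more robust: the paper's "identical route" claim tacitly assumes the downstream steps (the perfect matching, the shortcutting order) are deterministic and break ties the same way in both algorithms, whereas your argument only needs the tree weight, the matching bound of Lemma~\ref{lmm:perfect-matching}, and the triangle inequality. One small correction: your closing remark that "tightness follows from the example in Figure~\ref{fig:tight_example}" is not right --- the paper explicitly leaves the tightness of the $1.5$ ratio for \emph{Enforced} open, and that example only yields a lower bound of $1.375$ for \emph{Enforced} (the $1.5$ tightness it establishes is for \emph{SRTT}). This does not affect the validity of the upper bound you prove, which is all the theorem claims.
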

\begin{proof}
	We first establish that the MAI of the route generated by \emph{Enforced} is no greater than that of the route produced by \emph{SRTT}.
	Given a complete graph $G$, \emph{SRTT} finds a patrolling route by first forming the minimum spanning tree of $G$.
	When \emph{Enforced} enforces the shortest edge connected to the server node, it constructs the same spanning tree as \emph{SRTT}.
	Since the remaining steps in the route construction are identical for both approaches, they result in the same patrolling route with the same MAI.
	However, \emph{Enforced} explores additional routes by enforcing other edges connected to the server node and selects the route with the minimum MAI among them.
	As a result, the MAI of the route generated by \emph{Enforced} can only be smaller than or equal to that of the route produced by \emph{SRTT}.	

	According to Theorem~\ref{thm:SRTT_approx}, the MAI of the route generated by \emph{SRTT} is at most $1.5$ times the optimal MAI. 
	Since \emph{Enforced} guarantees a MAI no greater than that of \emph{SRTT}, the approximation ratio of \emph{Enforced} is at most $1.5$.
\end{proof}

Although \emph{Enforced} offers a $1.5$-approximation ratio, the tightness of this approximation ratio remains open.
The input instance used in the tightness analysis of \emph{SRTT} (Figure~\ref{fig:tight_example}) shows that the approximation ratio of \emph{Enforced} is at least $1.375$.
For this instance, \emph{Enforced} yields a route with the enforced edge $(v_0, v_{N})$, resulting in a round trip time at least $(N-1)(1+\epsilon)/2 + (N-1)$ and an MAI at least $3(N-1)(1+\epsilon)/4 + 2 \cdot (N-1)$.
As $\epsilon \rightarrow 0$, the ratio of the MAI derived from \emph{Enforced} to that of the optimal route becomes $\lim_{\epsilon \rightarrow 0}\frac{3(N-1)(1+\epsilon)/4 + 2 \cdot (N-1)}{2(N-1)(1+\epsilon) +3} = \frac{11(N-1)/4}{2N+1}$, which approaches $1.375$ as $N \rightarrow \infty$.

\section{Evaluation}\label{sec:eval}

We evaluate the performance of our proposed approaches by measuring the \emph{Maximum Age-of-Information} (MAI) of the routes they generate and the computation time required to construct these routes.
The evaluations are conducted on a set of synthetic scenarios with varying numbers of nodes and different node distributions.
Section~\ref{sec:eval_Settings} describes the setup and procedure for generating these scenarios.
In Section~\ref{sec:eval_MAI}, we analyze the MAI of routes generated by different approaches and introduce a hybrid method.
Finally, in Section~\ref{sec:eval_Exec} we compare the computation time of each approach.

\subsection{Settings}\label{sec:eval_Settings}

The evaluations are conducted on a machine with an Intel Core i7-8565U CPU@1.80~GHz and 16~GB of RAM. 
The approaches are implemented in C++ and compiled with g++ 9.4.0.
We synthesize a series of scenarios, each consisting of the coordinates of a server node and a set of stationary data nodes.
These scenarios serve as inputs for the approaches to generate the patrolling routes.
We measure the MAI of the generated routes and the computation time required to generate these routes.

\noindent \textbf{Data Sets}:
The synthetic scenarios are generated based on the following settings:
We consider two sets of scenarios with varying numbers of nodes: \emph{8-node} and \emph{20-node}.
A \emph{8-node} scenario consists of $8$ data nodes within a $1~km \times 1~km$ area, while a \emph{20-node} scenario includes $20$ data nodes within a $8~km \times 8~km$ area.
In each scenario, the coordinates of the server node are located at the center of the area, while the data nodes are distributed within the given area.

We employ three types of distributions for the data nodes: \emph{grid}, \emph{cluster}, and \emph{outlier}.
For both the \emph{grid} and \emph{cluster} distributions, the area is divided into a grid of $4 \times 4$ cells.
In the \emph{grid} distribution, each cell contains a similar number of data nodes.
In the \emph{cluster} distribution, only \textbf{one} cell contains data nodes for the \emph{8-node} scenario, and \textbf{four} cells contain data nodes for the \emph{20-node} scenario.
For the \emph{outlier} distribution, the area is divided into a grid of $2 \times 2$ cells, with only \textbf{two} cells containing data nodes: one cell has a single data node as an outlier, and the other cell has the remaining data nodes.

Given the number of nodes and the node distribution of a scenario, cells are randomly selected to place the data nodes.
Within each cell, the coordinates of the data nodes are randomly generated.
For each combination of the number of nodes and node distribution, we generate $100$ scenarios as our input data set.

For simplicity, we assume that the drone maintains the maximum speed while traveling between nodes.
The minimum direct travel time between two nodes is calculated using the Euclidean distance between them divided by the maximum speed of the drone.
Given that most commonly used drones can achieve a speed of $70~km/h$ (approximately $19.44~m/s$), we set the maximum speed of the drone to $20~m/s$.

\noindent \textbf{Comparing Approaches}:
We compare the following route generation approaches:
\begin{itemize}
	\item \emph{Shortest Round Trip Time} (SRTT): Our first approach, presented in Section~\ref{sec:shortest_route}, based on the Christofides algorithm. \emph{SRTT} aims to minimize the MAI by reducing the total round trip time of the route.
	\item \emph{Edge Enforcement} (Enforced): Our second approach, presented in Section~\ref{sec:edge_enforcement}. \emph{Enforced} iteratively enforces each edge connected to the server node as part of the route and selects the one with the smallest MAI among all generated routes.
	\item \emph{Nearest Neighbor} (Greedy): A simple heuristic that constructs a route by selecting the nearest unvisited node as the next node to visit.
	\item \emph{Dynamic Programming} (DP): An approach introduced in Section~\ref{sec:edge_enforcement} that computes the optimal route with the minimum MAI.
	\item \emph{Lin-Kernighan-Helsgaun TSP Solver} (LKH): A state-of-the-art TSP solver that aims to find the optimal route with the minimum round trip time. For our evaluation, we utilize LKH-3~\cite{LKH-3}, an implementation of the Lin-Kernighan-Helsgaun algorithm~\cite{LKHeuristic}.
\end{itemize}

\subsection{Maximum Age-of-Information (MAI) of Routes}\label{sec:eval_MAI}

\begin{figure*}[tb]
  \centering
	\begin{subfigure}{0.32\textwidth}
		\includegraphics[width=.95\textwidth]{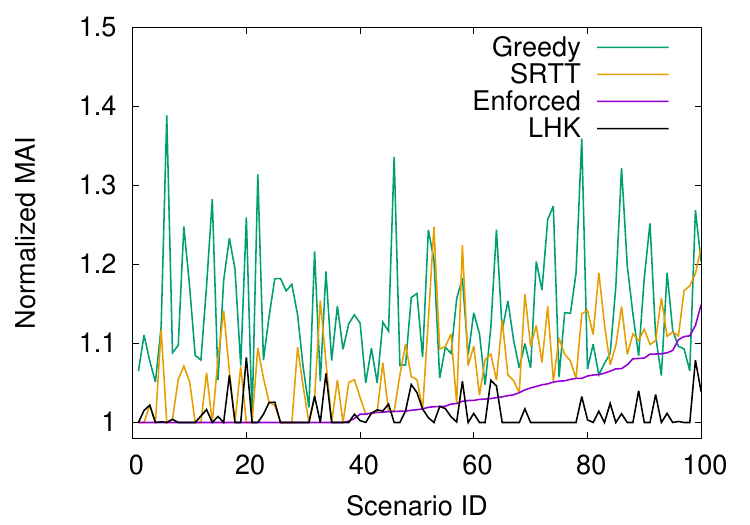}
		\caption{Grid}
	\end{subfigure}%
	\begin{subfigure}{0.32\textwidth}
		\includegraphics[width=.95\textwidth]{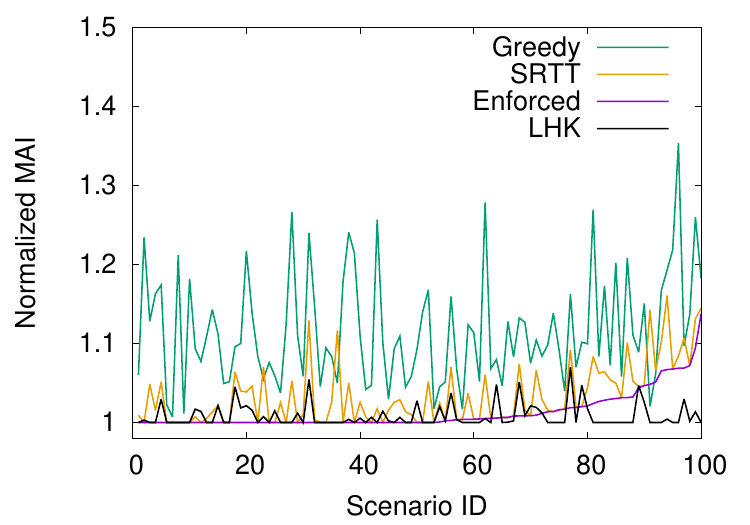}
		\caption{Cluster}
	\end{subfigure}%
	\begin{subfigure}{0.32\textwidth}
		\includegraphics[width=.95\textwidth]{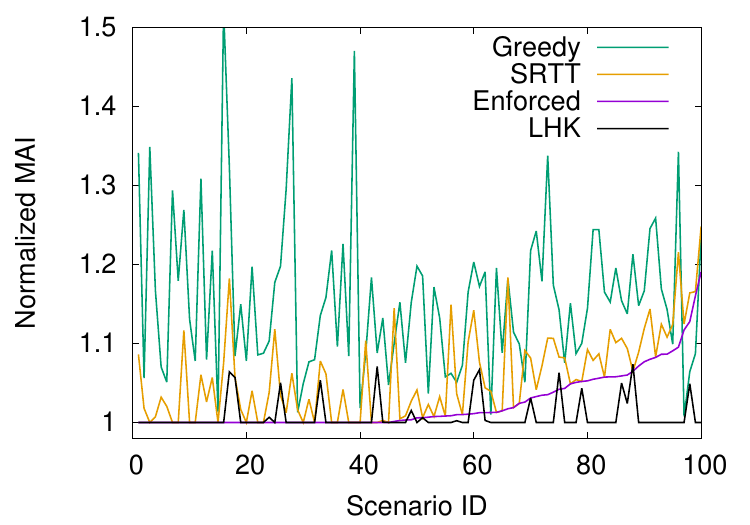}
		\caption{Outlier}
	\end{subfigure}
	\caption{MAIs of the routes generated by different approaches in the \emph{8-node} scenarios, normalized to the optimal solution}	
	\label{fig:MAIs_small}
\end{figure*}

\begin{figure*}[tb]
  \centering
	\begin{subfigure}{0.32\textwidth}
		\includegraphics[width=.95\textwidth]{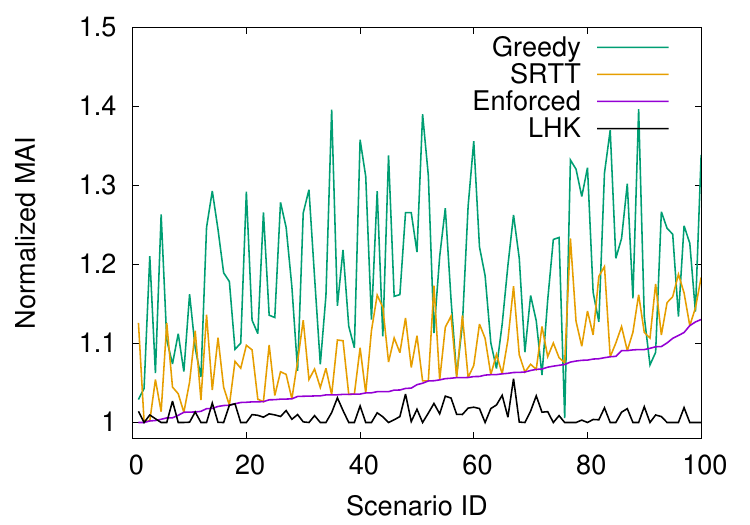}		
		\caption{Grid}
	\end{subfigure}%
	\begin{subfigure}{0.32\textwidth}
		\includegraphics[width=.95\textwidth]{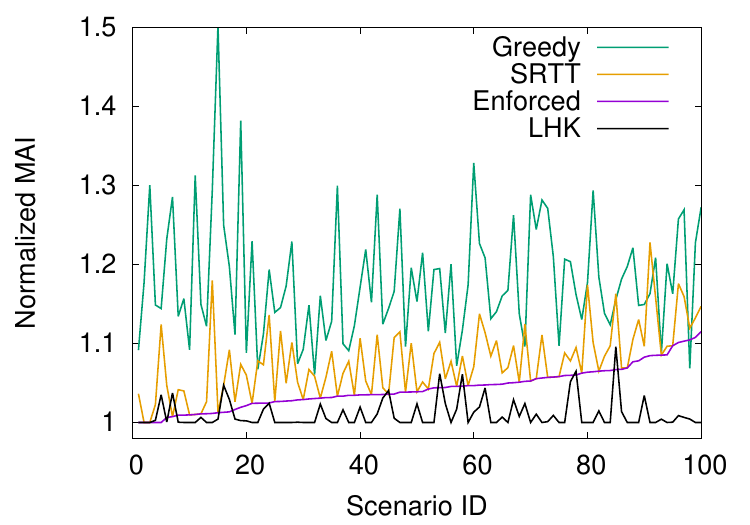}
		\caption{Cluster}
	\end{subfigure}%
	\begin{subfigure}{0.32\textwidth}
		\includegraphics[width=.95\textwidth]{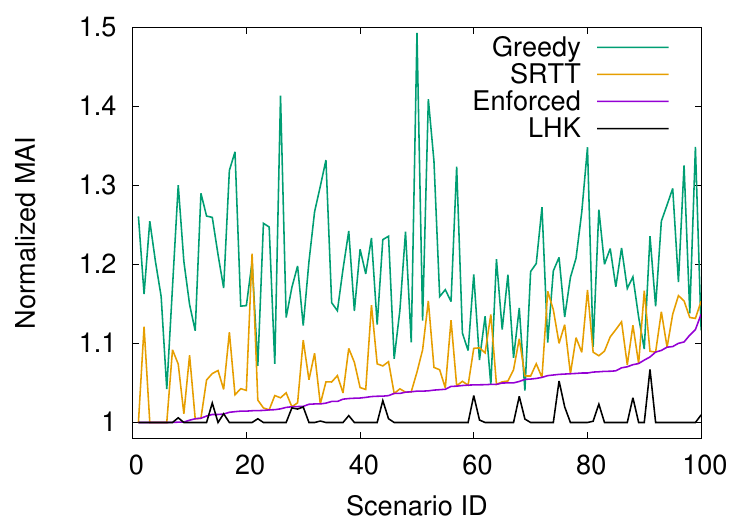}
		\caption{Outlier}
	\end{subfigure}
	\caption{MAIs of the routes generated by different approaches in the \emph{20-node} scenarios, normalized to the optimal solution.}
	\label{fig:MAIs_medium}
	\vspace{-1em}
\end{figure*}

\begin{table}[tb]
	\centering
	\caption{Average normalized MAIs of routes generated by different approaches across various scenario setups.}
	\label{tab:avgMAI}
	\begin{tabular}{c|c|cccc}
		\toprule
		& \textbf{Scenario} & \textbf{Greedy} & \textbf{SRTT} & \textbf{Enforced} & \textbf{LHK}\\
		\midrule
	 	\multirow{3}{*}{\emph{8-node}} &	 \emph{Grid} & 1.141		& 1.075		& 1.030		& 1.011 \\
		& \emph{Cluster} & 1.116	& 1.034	& 1.013	& 1.008 \\
		& \emph{Outlier} & 1.157	& 1.059	& 1.024	& 1.008 \\
		\midrule
		\multirow{3}{*}{\emph{20-node}} & \emph{Grid} & 1.194		& 1.093		& 1.052		& 1.009 \\
		& \emph{Cluster} & 1.181	& 1.076	& 1.043	& 1.010 \\
		& \emph{Outlier} & 1.198	& 1.076	& 1.042	& 1.004 \\
		\bottomrule
	\end{tabular}
	\vspace{-1em}	
\end{table}

We evaluate the MAI of routes generated by different approaches in synthetic scenarios.
To account for variations in MAI across different data node distributions, we normalize the MAI values relative to the optimal MAI, i.e., the MAI of the route generated by the \emph{Dynamic Programming} (DP) approach, for each scenario.
A normalized MAI of $1$ indicates that the route achieves the same MAI as the optimal route.
Note that \emph{DP} is not feasible for scenarios with a large number of nodes due to its high computational and memory requirements.

Figure~\ref{fig:MAIs_small} and Figure~\ref{fig:MAIs_medium} depict the normalized MAIs for their respective scenario setups, each containing $100$ scenarios.
The scenarios are sorted based on the normalized MAI of the routes generated by \emph{Enforced}.
Among the approaches, \emph{LHK} and \emph{Enforced} tend to generate routes with the lowest MAIs, with a maximum of $1.2$ and $1.15$ times the optimal MAI in the \emph{8-node} and \emph{20-node} scenarios, respectively.
In contrast, \emph{SRTT} and \emph{Greedy} generally produce routes with higher MAIs.
Table~\ref{tab:avgMAI} summarizes the average normalized MAIs of routes generated by different approaches across various scenario setups.

Although \emph{LHK} has the least average normalized MAI, there are scenarios where \emph{Enforced} generates routes with lower MAIs.
For example, in $47$ out of $100$ \emph{8-node} scenarios with a \emph{Grid} distribution, \emph{Enforced} generates routes with MAIs no larger than those of \emph{LHK}.
With a larger number of nodes in the \emph{20-node} scenarios, the likelihood of \emph{Enforced} outperforming \emph{LHK} decreases.
However, in $7$ out of $100$ scenarios with a \emph{Grid} distribution, \emph{Enforced} still performs at least as well as \emph{LKH}.
Nonetheless, \emph{Enforced} achieves the optimal MAI in at least $37\%$ of the \emph{8-node} scenarios, regardless of node distribution.
These empirical results support our claim that minimizing round trip time does not necessarily lead to a smaller MAI, highlighting \emph{Enforced} as a competitive approach for optimizing MAI in patrolling routes.

\begin{table}[tb]
	\centering
	\caption{Number of scenarios (out of $100$) where each approach produces the route with the optimal MAI.}
	\label{tab:LHKvsEnforced}
	\begin{tabular}{c|c|cc|c}
		\toprule
		& \textbf{Scenario} & \textbf{Enforced} & \textbf{LHK} & \textbf{Hybrid}\\
		\midrule
		\multirow{3}{*}{\emph{8-node}} &	 \emph{Grid} & 37 & 49 & 66 \\
		&	 \emph{Cluster} & 53 & 50 & 74 \\
		&	 \emph{Outlier} & 45 & 78 & 86 \\
		\midrule
		\multirow{3}{*}{\emph{20-node}} &	 \emph{Grid} & 2 & 28 & 30 \\
		&	 \emph{Cluster} & 5 & 48 & 50 \\
		&	 \emph{Outlier} & 6 & 78 & 78 \\
		\bottomrule
	\end{tabular}
	\vspace{-1em}
\end{table}

\noindent \textbf{Hybrid Approach}:
Based on the evaluation results, we propose a hybrid approach that leverages the strengths of both \emph{Enforced} and \emph{LHK}.
The hybrid approach generates routes by applying both \emph{Enforced} and \emph{LHK}, then selects the route with the smallest MAI among the two.
Table~\ref{tab:LHKvsEnforced} summarizes the number of scenarios in which each of the three approaches generates the route with an MAI equal to that of the optimal route.

These results show that the hybrid approach achieves the optimal MAI in at least $66\%$ and $30\%$ of the \emph{8-node} and \emph{20-node} scenarios, respectively, regardless of the node distribution.
Note that except in the \emph{20-node} scenarios with \emph{Outlier} distribution, the hybrid approach leads to more scenarios with the optimal MAI compared to \emph{LHK}, due to the fact that in some scenarios, \emph{Enforced} generates routes with lower MAIs than \emph{LHK}.

\subsection{Computation Time}\label{sec:eval_Exec}

\begin{figure}[tb]
	\centering
	\includegraphics[width=0.95\linewidth]{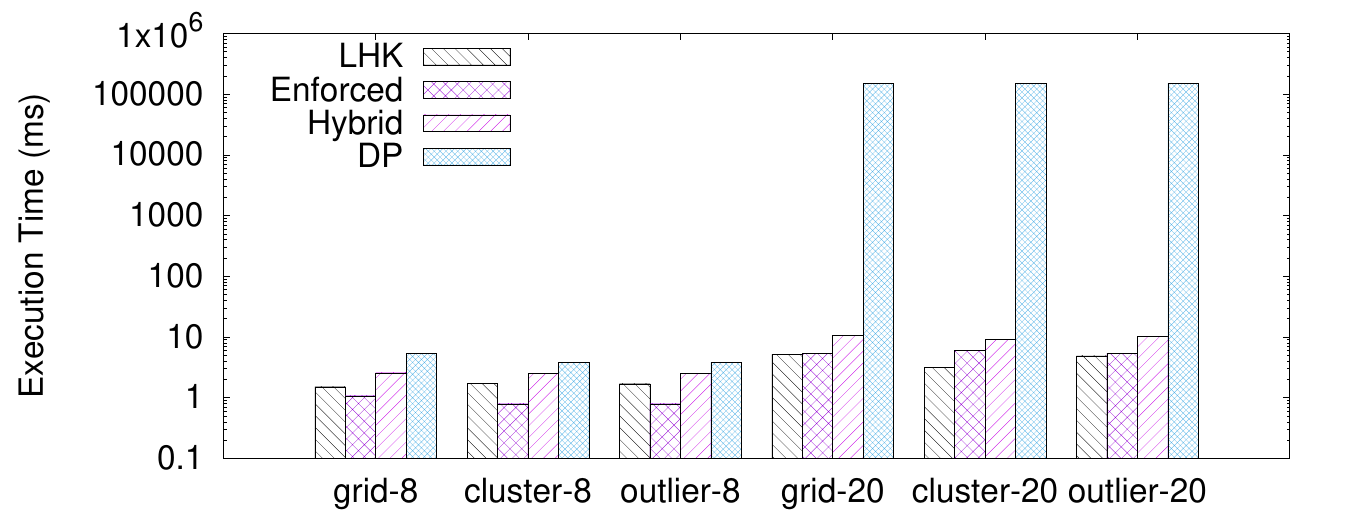}
	\caption{Average computation time for route generation across different approaches.}
	\label{fig:computation_time}
	\vspace{-1em}
\end{figure}

We compare the average computation times required to generate routes using different approaches across various scenario setups, focusing on \emph{Enforced}, \emph{LHK}, \emph{DP}, and \emph{Hybrid}.
The computation time for route generation is critical in practice, as data nodes may be redeployed, triggering updates to the route to ensure data freshness.
The analysis focuses solely on the time needed to generate the routes, excluding the time for calculating travel times between nodes and generating outputs.
Computation time is measured in milliseconds (ms).

Figure~\ref{fig:computation_time} illustrates the average computation time for route generation by each approach, with the y-axis on a logarithmic scale.
We observe that the computation time increases with the number of nodes in a scenario, while the distribution of data nodes has no significant effect.
The \emph{DP} approach exhibits the longest computation time, which grows exponentially with the number of nodes due to its $O(2^N \cdot N^3)$ time complexity, as discussed in Section~\ref{sec:edge_enforcement}, where $N$ is the number of nodes.
With a larger node counts, such as $24$ nodes in our implementation, \emph{DP} becomes impractical and fails to provide a solution within a reasonable time frame.
Specifically, a scenario with $23$ nodes takes approximately $30$ minutes to generate the optimal route, while a scenario with $24$ nodes is infeasible.
 
In contrast, the other three approaches generate solutions within a few milliseconds.
\emph{LHK} has a slightly longer computation time than \emph{Enforced} in the \emph{8-node} scenarios, but \emph{Enforced} takes longer as the number of nodes increases.
The \emph{Hybrid} method, as the combination of \emph{Enforced} and \emph{LHK}, has a computation time that is the sum of the two individual approaches.
Still, compared to \emph{DP}, the computation time of \emph{Hybrid} is significantly shorter, making it a practical approach for generating routes with minimized MAI.

In summary, while \emph{DP} yields an optimal route with the minimized MAI, it requires significantly more computation time and memory than other approaches.
On the other hand, \emph{Enforced} and \emph{LHK} generate routes with competitive MAIs in a fraction of the time required by \emph{DP}.
By combining the strengths of \emph{Enforced} and \emph{LHK}, the \emph{Hybrid} approach can achieve the optimal MAI in most scenarios in a reasonable computation time, making it a practical solution for minimizing the MAI of patrolling routes.

\section{Conclusion}\label{sec:conclusion}

In this work, we address the \emph{AoI-aware route planning problem}, which aims to minimize the Maximum Age-of-Information (MAI) in systems employing the \emph{Store-Carry-Forward} (SCF)  paradigm.
We prove that the decision version of the problem is an NP-Complete problem.
To tackle this challenge, we propose two approximation approaches, \emph{Shortest Round Trip Time} and \emph{Edge Enforcement}, along with a theoretical analysis of their approximation ratios.
The evaluation results confirm our claims regarding round trip time and MAI while also validating our theoretical analysis.
By combining the \emph{Enforced} approach with the state-of-the-art TSP solver, we develop a hybrid approach that effectively generates routes with near-optimal MAI, demonstrating its practical efficiency.

\bibliographystyle{IEEEtran}
\bibliography{ref}

\end{document}